\documentclass{llncs}
\bibliographystyle{splncs}
\usepackage{tikz} 
\usepackage{color}
\usepackage{units,amsmath,amsfonts,amssymb,xspace,stmaryrd}

\newcommand{\ket}[1]{\left\vert{#1}\right\rangle}

\newcommand{\CZ}{\ensuremath{\wedge Z}\xspace}

\newcommand{\fdhilb}{\bf FdHilb}
\newcommand{\wfdhilb}{\fdhilb_{\text{wp}}}
\newcommand{\catD}{\ensuremath{{\cal D}}\xspace}
\newcommand{\id}[1]{\ensuremath{1_{#1}}}
\newcommand{\denote}[1]{\llbracket #1 \rrbracket}

\newcommand{\suck}{\vspace{-1mm}}
\newcommand{\suckb}{\vspace{-1.5mm}}
\newcommand{\suckbb}{\vspace{-2mm}}
\newcommand{\suckbbb}{\vspace{-3mm}}
\newcommand{\suckbbbb}{\vspace{-4mm}}

\spnewtheorem*{spiderlaw}{Spider Law}{\bfseries}{\itshape}

\begin{document}

\title{Graphs States and the necessity of Euler Decomposition}
\author{Ross Duncan\inst{1} \and Simon Perdrix\inst{2,3}} 
\institute{Oxford University Computing Laboratory\\
    Wolfson Building, Parks Road, OX1 3QD Oxford, UK\\
    \email{ross.duncan@comlab.ox.ac.uk}
\and
LFCS, University of Edinburgh, UK\\
\and
PPS, Universit\'e Paris Diderot, France\\
\email{simon.perdrix@pps.jussieu.fr}
}

\maketitle

\begin{abstract}
Coecke and Duncan recently introduced a categorical formalisation of the 
interaction of complementary quantum observables. In this paper we use 
their diagrammatic language to study graph states, a computationally
interesting class of quantum 
states.  We give a graphical proof of
the fixpoint property of graph states. We then introduce a new
equation, for the  Euler decomposition of the Hadamard gate, and  
demonstrate that Van den Nest's theorem---locally
equivalent graphs represent the same entanglement---is equivalent to
this new axiom.   Finally we prove that the Euler decomposition equation
is not derivable from the existing axioms.
\end{abstract}

\noindent Keywords: quantum computation, monoidal categories,
graphical calculi.

\section{Introduction}

Upon asking the question ``What are the axioms of quantum mechanics?''
we can expect to hear the usual story about states being vectors of
some Hilbert space, evolution in time being determined by unitary
transformations, etc.  
However, even before finishing chapter one of
the textbook, we surely notice that something is amiss.  Issues around
normalisation, global phases, etc. point to an ``impedence mismatch''
between the theory of quantum 
mechanics and the mathematics used to formalise it.  The question
therefore should be ``What are the axioms of quantum mechanics
\emph{without Hilbert spaces'?}''


In their seminal paper \cite{AbrCoe:CatSemQuant:2004} Abramsky and
Coecke approached this question by studying the categorical structures 
necessary to carry out certain quantum information processing tasks.
The categorical treatment provides as an intuitive pictorial formalism
where quantum states and processes are represented as certain
diagrams, and equations between them are described by rewriting
diagrams.  A recent contribution to this programme was Coecke and
Duncan's axiomatisation of the algebra of a pair complementary
observables \cite{Coecke:2008nx} in terms 
of the \emph{red-green calculus}.  The formalism, while quite
powerful, is known to be incomplete in the following sense: there exist
true equations which are not derivable from the axioms.

In this paper we take one step towards its completion.  We use the
red-green language to study \emph{graph states}.    Graph states \cite{HDERNB06-survey} are
very important class of states used in quantum information processing,
in particular with relation to the  one-way model of quantum
computing \cite{Raussendorf-2001}.   Using the
axioms of the red-green system, we attempt to prove Van den Nest's
theorem \cite{VdN04}, which establishes the local complementation
property for graph states.  In so doing we show that a new equation
must be added  to the system, namely that expressing the  Euler
decomposition of the Hadamard gate.  More precisely, we show that Van
den Nest's theorem is equivalent to the decomposition of $H$, and that
this equation cannot be deduced from the existing axioms of the system.

The paper procedes as follows:  we introduce the graphical language
and the axioms of the red-green calculus, and its basic properties;  we
then introduce graph states and and prove the fixpoint property
of graph states within the calculus; we state Van den
Nest's theorem, and prove our main result---namely that the theorem is
equivalent to the Euler decomposition of $H$.  Finally we demonstrate
a model of the red-green axioms where the Euler decomposition does not
hold, and conclude that this is indeed a new axiom which should be
added to the system.

\section{The Graphical Formalism}
\label{sec:graphical-formalism}

\begin{definition}
  A \emph{diagram} is a finite undirected open graph generated by the
  following two families of vertices:
  \begin{gather*}
    \delta_Z =  \vcenter{\hbox{\begin{tikzpicture}[xscale=0.50,yscale=0.50]
\useasboundingbox (-0.5,-0.5) rectangle (0.5,0.5);
\draw (0.00,0.00) -- (-1.00,-1.00);
\draw (0.00,0.00) -- (1.00,-1.00);
\draw (0.00,1.00) -- (0.00,0.00);
\filldraw[fill=green] (0.00,0.00) ellipse (0.20cm and 0.20cm);
\end{tikzpicture}
}}  \qquad  \delta^\dag_Z =  \vcenter{\hbox{\begin{tikzpicture}[xscale=0.50,yscale=0.50]
\useasboundingbox (-0.5,-0.5) rectangle (0.5,0.5);
\draw (0.00,0.00) -- (-1.00,1.00);
\draw (0.00,0.00) -- (0.00,-1.00);
\draw (1.00,1.00) -- (0.00,0.00);
\filldraw[fill=green] (0.00,0.00) ellipse (0.20cm and 0.20cm);
\end{tikzpicture}
}}  \qquad 
    \epsilon_Z = \vcenter{\hbox{\begin{tikzpicture}[xscale=0.50,yscale=0.50]
\useasboundingbox (-0.5,-0.5) rectangle (0.5,1.5);
\draw (0.00,0.00) -- (0.00,1.00);
\filldraw[fill=green] (0.00,0.00) ellipse (0.20cm and 0.20cm);
\end{tikzpicture}
}}  \qquad \epsilon^\dag_Z = \vcenter{\hbox{\begin{tikzpicture}[xscale=0.50,yscale=0.50]
\useasboundingbox (-0.5,-0.5) rectangle (0.5,1.5);
\draw (0.00,1.00) -- (0.00,0.00);
\filldraw[fill=green] (0.00,1.00) ellipse (0.20cm and 0.20cm);
\end{tikzpicture}
}}  \qquad p_Z(\alpha) = \vcenter{\hbox{\begin{tikzpicture}[xscale=0.50,yscale=0.50]
\useasboundingbox (-0.5,-0.5) rectangle (0.5,0.5);
\draw (0.00,0.00) -- (0.00,1.00);
\draw (0.00,-1.00) -- (0.00,0.00);
\filldraw[fill=green] (0.00,0.00) ellipse (0.50cm and 0.30cm);
\draw (0.00,0.00) node{$\alpha$};
\end{tikzpicture}
}} \\
    \delta_X =  \vcenter{\hbox{\begin{tikzpicture}[xscale=0.50,yscale=0.50]
\useasboundingbox (-0.5,-0.5) rectangle (0.5,0.5);
\draw (0.00,0.00) -- (-1.00,-1.00);
\draw (0.00,0.00) -- (1.00,-1.00);
\draw (0.00,1.00) -- (0.00,0.00);
\filldraw[fill=red] (0.00,0.00) ellipse (0.20cm and 0.20cm);
\end{tikzpicture}
}}  \qquad  \delta^\dag_X =  \vcenter{\hbox{\begin{tikzpicture}[xscale=0.50,yscale=0.50]
\useasboundingbox (-0.5,-0.5) rectangle (0.5,0.5);
\draw (0.00,0.00) -- (-1.00,1.00);
\draw (0.00,0.00) -- (0.00,-1.00);
\draw (1.00,1.00) -- (0.00,0.00);
\filldraw[fill=red] (0.00,0.00) ellipse (0.20cm and 0.20cm);
\end{tikzpicture}
}}  \qquad 
    \epsilon_X = \vcenter{\hbox{\begin{tikzpicture}[xscale=0.50,yscale=0.50]
\useasboundingbox (-0.5,-0.5) rectangle (0.5,1.5);
\draw (0.00,0.00) -- (0.00,1.00);
\filldraw[fill=red] (0.00,0.00) ellipse (0.20cm and 0.20cm);
\end{tikzpicture}
}}  \qquad \epsilon^\dag_X = \vcenter{\hbox{\begin{tikzpicture}[xscale=0.50,yscale=0.50]
\useasboundingbox (-0.5,-0.5) rectangle (0.5,1.5);
\draw (0.00,1.00) -- (0.00,0.00);
\filldraw[fill=red] (0.00,1.00) ellipse (0.20cm and 0.20cm);
\end{tikzpicture}
}}  \qquad p_X(\alpha) =\vcenter{\hbox{\begin{tikzpicture}[xscale=0.50,yscale=0.50]
\useasboundingbox (-0.5,-0.5) rectangle (0.5,0.5);
\draw (0.00,0.00) -- (0.00,1.00);
\draw (0.00,-1.00) -- (0.00,0.00);
\filldraw[fill=red] (0.00,0.00) ellipse (0.50cm and 0.30cm);
\draw (0.00,0.00) node{$\alpha$};
\end{tikzpicture}
}} 
  \end{gather*}
  where $\alpha \in [0,2\pi)$,  and a vertex $H = \vcenter{\hbox{\begin{tikzpicture}[xscale=0.50,yscale=0.50]
\useasboundingbox (-0.5,-0.5) rectangle (0.5,2.5);
\draw (0.00,2.00) -- (0.00,1.00) -- (0.00,0.00);
\filldraw[fill=white] (-0.40, 0.60)  -- (-0.40,1.40) -- (0.40,1.40) -- (0.40,0.60) -- (-0.40,0.60);
\draw (0.00,1.00) node{\tiny $H$};
\end{tikzpicture}
}}$ belonging to
  neither family.
\end{definition}

Diagrams form a monoidal category \catD in the evident way: composition is
connecting up the edges, while tensor is simply putting two diagrams
side by side.  In fact, diagrams form a $\dag$-compact category
\cite{KelLap:comcl:1980,AbrCoe:CatSemQuant:2004} but we will suppress
the details of this and let the pictures speak for themselves.  We
rely here on general results 
\cite{JS:1991:GeoTenCal1,Selinger:dagger:2005,Duncan:thesis:2006} 
which state that a pair diagrams are equal by the axioms of
$\dag$-compact categories exactly when they may be deformed to each
other.

Each family forms a  \emph{basis structure}  \cite{PavlovicD:MSCS08}
with an associated \emph{local phase shift}.  The axioms describing
this structure can be subsumed by the following law.  Define $
\delta_0 = \epsilon^\dag$, $\delta_1 = \id{}$ and $\delta_n = 
(\delta_{n-1} \otimes \id{}) \circ \delta$,
and define $\delta^\dag_n$ similarly. 

\begin{spiderlaw}
  Let $f$ be a connected diagram, with $n$ inputs and $m$ outputs, and
  whose vertices are drawn entirely from one family; then 
  \[
  f = \delta_m \circ p(\alpha) \circ \delta^\dag_n 
  \qquad\qquad \emph{where }
  \alpha = \sum_{p(\alpha_i) \in f} \alpha_i\! \mod 2\pi
  \]
  with the convention that $p(0) = \id{}$.\\
  \suckbbbb\suckb
  \begin{center}
    $\vcenter{\hbox{\input{diag/spider-l.tex}}} =
    \vcenter{\hbox{\input{diag/spiderrot2.tex}}}$
  \end{center}
\end{spiderlaw}
\suckbbb

\noindent 
The spider law justifies the use of ``spiders'' in diagrams:
coloured vertices of arbitrary degree labelled by some angle
$\alpha$.  By convention, we leave the vertex empty if $\alpha = 0$.
\suckbbb
\[
\vcenter{\hbox{\input{diag/spiderz.tex}}}\suck
\]
We use the spider law as rewrite equation between graphs.  It allows
vertices of the same colour to be merged, or single vertices to be
broken up.  An important special case is when $n = m = 1$ and no angles
occur in $f$; in this case $f$ can be reduced to a simple line.  (This
implies that both families generate the same compact structure.) 
\suckbb
\[
\vcenter{\hbox{\begin{tikzpicture}[xscale=0.50,yscale=0.50]
\useasboundingbox (-0.5,-0.5) rectangle (0.5,0.5);
\draw (0.00,0.00) -- (0.00,-1.00);
\draw (0.00,1.00) -- (0.00,0.00);
\filldraw[fill=green] (0.00,0.00) ellipse (0.20cm and 0.20cm);
\end{tikzpicture}
}} 
= \vcenter{\hbox{\begin{tikzpicture}[xscale=0.50,yscale=0.50]
\useasboundingbox (-0.5,-0.5) rectangle (0.5,1.5);
\draw (0.00,1.00) -- (0.00,0.00) -- (0.00,1.00);
\end{tikzpicture}
}}\suck
\]

\begin{lemma}
  A diagram without $H$ is equal to a bipartite graph.
\end{lemma}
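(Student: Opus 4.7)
The plan is to apply the Spider Law greedily to merge adjacent vertices of the same colour, and then to observe that the resulting diagram has precisely the bipartite structure we want.

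Concretely, consider a maximal connected monochromatic subdiagram $f$ of the given diagram, that is, a maximal connected subgraph all of whose vertices are drawn from a single family. Regard every wire leaving $f$ (either to a vertex of the other colour or to an input/output of the whole diagram) as an external wire of $f$. Then $f$ satisfies the hypothesis of the Spider Law and is therefore equal to a single spider of its colour, carrying the sum modulo $2\pi$ of the phases occurring in $f$. Replacing each maximal monochromatic component of the original diagram in this way produces an equal diagram in which, by construction, no two adjacent vertices share a colour. Hence the underlying graph is bipartite, with one side formed by the surviving green spiders and the other by the red ones.

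The only real thing to check is that the Spider Law actually covers every degenerate shape a maximal monochromatic component can have: a component consisting of a single $\epsilon$ or $\epsilon^\dag$ (no external wire on one side); a component containing cycles, parallel edges, or self-loops; and a component some of whose external wires are inputs/outputs of the whole diagram rather than attachments to vertices of the other colour. In each case the Spider Law still delivers a single spider with exactly the right number of external wires, so no auxiliary rewriting is needed. Termination is immediate, since the procedure performs one replacement per maximal monochromatic component and these are determined uniquely by the original diagram.
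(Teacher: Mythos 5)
Your proposal is correct and follows essentially the same route as the paper: collapse each maximal monochromatic connected region to a single spider via the Spider Law, after which adjacent vertices necessarily differ in colour and the diagram is bipartite. The extra care you take over degenerate components (isolated $\epsilon$'s, cycles, external wires) is fine but not needed beyond what the Spider Law as stated already guarantees.
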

\begin{proof}
  If any two adjacent vertices are the same colour they may be merged
  by the spider law.  Hence if we can do such mergings, every green vertex is
  adjacent only to red vertices, and vice versa.
\end{proof}

\noindent We interpret diagrams in the category  $\wfdhilb$;  this the
category of complex Hilbert spaces and linear maps under the equivalence
relation $f \equiv g$ iff there exists $\theta$ such that $f =
e^{i\theta} g$.  A diagram $f$ with $n$ inputs and $m$ output defines a
linear map  $\denote{f} : \mathbb{C}^{\otimes 2n} \to
\mathbb{C}^{\otimes 2m}$.  Let
\[
\begin{array}{ccc}
  \denote{\epsilon_Z^\dag} = \frac{1}{\sqrt{2}}(\ket{0}+\ket{1}) 
  &  \qquad &
  \denote{\epsilon_X^\dag} = \ket{0} 
  \\ \\
  \denote{\delta_Z^\dag} = \left(
    \begin{array}{cccc} 1&0&0&0\\0&0&0&1 \end{array}
  \right) & \qquad &
  \denote{\delta_X^\dag} =  \frac{1}{\sqrt{2}}\left(
    \begin{array}{cccc} 1&0&0&1\\0&1&1&0 \end{array}
  \right)
  \\ \\
    \denote{p_Z(\alpha)} = \left(
    \begin{array}{cc}
      1&0\\0&e^{i \alpha }
    \end{array} \right)
  & \qquad &
  \denote{p_X(\alpha)} = e^{-\frac{i\alpha}{2}}\left(
    \begin{array}{cc}
      \cos \frac{\alpha}{2} & i \sin \frac{\alpha}{2} \\
      i \sin \frac{\alpha}{2} & \cos \frac{\alpha}{2}
    \end{array} \right)
\end{array}
\]
\[ 
\denote{H} = \frac{1}{\sqrt{2}} \left(
  \begin{array}{cc}
    1&1\\1&-1
  \end{array} \right)
\]
and set $\denote{f^\dag} = \denote{f}^\dag$.  The map $\denote{\cdot}$
extends in the evident way to a monoidal functor.  

The interpretation of \catD contains a universal set of quantum gates.
Note that $p_Z(\alpha)$ and $p_X(\alpha)$ are the rotations around the
$X$ and $Z$ axes, and in particular when $\alpha = \pi$ they yield
the Pauli $X$ and $Z$ matrices. The \CZ is defined by:
\suckbbb
\[
\CZ =  \vcenter{\hbox{\input{diag/controlZ.tex}}} \suckbbb
\]
The $\delta_X$ and $\delta_Z$ maps \emph{copy} the eigenvectors of the
Pauli $X$ and $Z$; the $\epsilon$ maps \emph{erase} them.  (This is why
such structures are called basis structures).  

Now we introduce the equations\footnote{We have, both above and below,
  made some simplifications to the axioms of \cite{Coecke:2008nx}
  which are specific to the case of qubits.  We also supress scalar factors.}  
which make the $X$ and $Z$ families into \emph{complementary} basis
structures as in \cite{Coecke:2008nx}.  Note that all of the equations
are also satisfied in satisfied in the Hilbert space interpretation. We
present them in one colour only; they also hold with the colours reversed.

\begin{description}
\item[Copying] 
\[
\vcenter{\hbox{\begin{tikzpicture}[xscale=0.50,yscale=0.50]
\useasboundingbox (-0.5,-0.5) rectangle (2.5,2.5);
\draw (1.00,1.00) -- (0.00,0.00);
\draw (1.00,1.00) -- (2.00,0.00);
\draw (1.00,2.00) -- (1.00,1.00);
\filldraw[fill=red] (1.00,2.00) ellipse (0.20cm and 0.20cm);
\filldraw[fill=green] (1.00,1.00) ellipse (0.20cm and 0.20cm);
\end{tikzpicture}
}} =\vcenter{\hbox{}} \vcenter{\hbox{}} \text{ and } \vcenter{\hbox{\begin{tikzpicture}[xscale=0.50,yscale=0.50]
\useasboundingbox (-0.5,-0.5) rectangle (2.5,2.5);
\draw (1.00,1.00) -- (0.00,0.00);
\draw (1.00,1.00) -- (2.00,0.00);
\draw (1.00,2.00) -- (1.00,1.00);
\filldraw[fill=green] (1.00,2.00) ellipse (0.20cm and 0.20cm);
\filldraw[fill=red] (1.00,1.00) ellipse (0.20cm and 0.20cm);
\end{tikzpicture}
}} =\vcenter{\hbox{}} \vcenter{\hbox{}}
\]
\item[Bialgebra]
\suckbbb
\[
\vcenter{\hbox{\input{diag/bialgebra-l.tex}}} = \vcenter{\hbox{\begin{tikzpicture}[xscale=0.50,yscale=0.50]
\useasboundingbox (-0.5,-0.5) rectangle (2.5,3.5);
\draw (1.00,2.00) -- (0.00,3.00);
\draw (2.00,3.00) -- (1.00,2.00);
\draw (1.00,2.00) -- (1.00,1.00) -- (2.00,0.00);
\draw (1.00,1.00) -- (0.00,0.00);
\filldraw[fill=green] (1.00,2.00) ellipse (0.20cm and 0.20cm);
\filldraw[fill=red] (1.00,1.00) ellipse (0.20cm and 0.20cm);
\end{tikzpicture}
}}
\]
\item[$\pi$-Commutation]
\suckbbbb
\[
\vcenter{\hbox{\input{diag/pi-classical-l.tex}}}  
= \vcenter{\hbox{\begin{tikzpicture}[xscale=0.50,yscale=0.50]
\useasboundingbox (-0.5,-0.5) rectangle (2.5,3.5);
\draw (1.00,1.00) -- (1.00,2.00) -- (1.00,3.00);
\draw (1.00,1.00) -- (2.00,0.00);
\draw (1.00,1.00) -- (0.00,0.00);
\filldraw[fill=green] (1.00,2.00) ellipse (0.50cm and 0.30cm);
\filldraw[fill=red] (1.00,1.00) ellipse (0.20cm and 0.20cm);
\draw (1.00,2.00) node{\tiny $\pi$};
\end{tikzpicture}
}}
~~~~
\vcenter{\hbox{\begin{tikzpicture}[xscale=0.50,yscale=0.50]
\useasboundingbox (-0.5,-0.5) rectangle (1.5,2.5);
\draw (1.00,0.00) -- (1.00,1.00) -- (1.00,2.00);
\filldraw[fill=green] (1.00,1.00) ellipse (0.50cm and 0.30cm);
\filldraw[fill=red] (1.00,0.00) ellipse (0.20cm and 0.20cm);
\draw (1.00,1.00) node{\tiny $\pi$};
\end{tikzpicture}
}} 
= \vcenter{\hbox{}} 
\suckbbb
\]
\end{description}
\noindent 
A consequence of the axioms we have presented so far is the Hopf Law:
\[
\vcenter{\hbox{\input{diag/hopf-l.tex}}}  = \vcenter{\hbox{\begin{tikzpicture}[xscale=0.50,yscale=0.50]
\useasboundingbox (-0.5,-0.5) rectangle (2.5,2.5);
\draw (1.00,3.00) -- (1.00,2.00);
\draw (1.00,0.00) -- (1.00,-1.00);
\filldraw[fill=green] (1.00,2.00) ellipse (0.20cm and 0.20cm);
\filldraw[fill=red] (1.00,0.00) ellipse (0.20cm and 0.20cm);
\end{tikzpicture}
}}
\]
This equation, when combined with the spider law, provides a very useful
property, namely that every diagram (without $H$) is equal to one
without parallel edges.
\begin{lemma}\label{lem:no-plll-edges}
  Every diagram without $H$ is equal to one without parallel edges.
\end{lemma}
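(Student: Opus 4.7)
The plan is to use the Hopf law as the key reducer, together with the spider law to bring adjacent spiders into the shape the Hopf law requires. By the previous lemma I may assume the diagram is already bipartite, so every edge joins a green vertex to a red one. Suppose some green spider $G$ and red spider $R$ are connected by $k \geq 2$ parallel edges; the goal of the induction step is to rewrite the diagram so that this count becomes $k-2$ while all other adjacencies are preserved.

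To do this I will bring the local neighbourhood of $G$ and $R$ into the exact shape of the left-hand side of the Hopf law and then apply it. Reading the spider law from right to left, I split $G$ into two green spiders $G_1, G_2$ joined by a single edge, arranging that $G_2$ carries exactly two of the parallel edges to the red side plus its one edge to $G_1$, while $G_1$ inherits the remaining $k-2$ parallel edges together with all other original neighbours of $G$; I split $R$ symmetrically into $R_1, R_2$. Now $G_2$ and $R_2$ are joined by two parallel edges and each has a single extra edge, which is exactly the configuration of the Hopf law, so applying it severs $G_2$ from $R_2$. This leaves $G_2$ as a green vertex of degree $1$ attached only to $G_1$, and the spider law then absorbs $G_2$ back into $G_1$ to yield a single green spider $G'$; likewise $R_1, R_2$ merge into $R'$. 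A direct degree count shows $G'$ has the same external neighbours as $G$ but is now joined to $R'$ by $k-2$ parallel edges, with the rest of the diagram untouched.

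Iterating reduces the multiplicity of every adjacent pair to at most one, and terminates since the total number of edges strictly decreases at each step. The main point to be careful about is that the three local rewrites used---the two spider-law splits, the Hopf cancellation, and the two spider-law re-merges---really do confine their effect to the subdiagram spanned by $G$, $R$ and the $k$ edges between them, so that we do not inadvertently create new parallel edges elsewhere; this is however immediate from the locality of the rules, so no serious obstacle is expected.
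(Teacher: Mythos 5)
Your proof is correct and follows essentially the same route as the paper's: merge same-coloured adjacent vertices by the spider law (you do this globally via the bipartiteness lemma), then repeatedly apply the Hopf law to strip off pairs of parallel edges between differently coloured vertices, inducting on the edge count. The extra detail you supply—splitting the spiders so that the Hopf law's left-hand side appears literally, then re-absorbing the degree-one remnants—is exactly the bookkeeping the paper leaves implicit.
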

\begin{proof}
  Suppose that $v,u$ are vertices in some diagram, connected by two or
  more edges.  If they are the same colour, they can be joined by the
  spider law, eliminating the edges between them.  Otherwise the Hopf
  law allows one pair of parallel edges to be removed; the result
  follows by induction.
\end{proof}
Finally, we introduce the equations for $H$: \suckbbbb
\[
\vcenter{\hbox{\input{diag/h-sq.tex}}}  =  \vcenter{\hbox{\begin{tikzpicture}[xscale=0.50,yscale=0.50]
\useasboundingbox (-0.5,-0.5) rectangle (0.5,3.5);
\draw (0.00,3.00) -- (0.00,0.00);
\end{tikzpicture}
}}~~
\vcenter{\hbox{\input{diag/h-delta-l.tex}}}=\vcenter{\hbox{\input{diag/h-delta-r.tex}}}
\vcenter{\hbox{\begin{tikzpicture}[xscale=0.50,yscale=0.50]
\useasboundingbox (-0.5,-0.5) rectangle (1.5,2.5);
\draw (1.00,1.00) -- (1.00,0.00);
\draw (1.00,2.00) -- (1.00,1.00);
\filldraw[fill=white] (0.60, 0.60)  -- (0.60,1.40) -- (1.40,1.40) -- (1.40,0.60) -- (0.60,0.60);
\filldraw[fill=red] (1.00,0.00) ellipse (0.20cm and 0.20cm);
\draw (1.00,1.00) node{\tiny $H$};
\end{tikzpicture}
}} = \vcenter{\hbox{}}~~~~
\vcenter{\hbox{\input{diag/h-rot-l.tex}}} =
~\vcenter{\hbox{\input{diag/h-rot-r.tex}}} 
\]
The special role of $H$ in the system is central to our investigation
in this paper.

\section{Generalised Bialgebra Equations}

The bialgebra law is a key equation in the graphical calculus. Notice
that the left hand side of the equation is a 2-colour bipartite
graph which is both a $K_{2,2}$ (i.e a complete bipartite graph) and
a $C_4$ (i.e. a cycle composed of 4 vertices) with alternating
colours. In the following we introduce two generalisations of the
bialgebra equation, one for any  $K_{n,m}$ and another one for any
$C_{2n}$ (even cycle).  

We give graphical proofs for both generalisations; both proofs rely
essentially on the primitive bialgebra equation. 

\begin{lemma}\label{lem:knmp2}  For any $n,m$, ``$K_{n,m} = P_2$'', graphically: 
\vspace{-0.6cm}
\[ 
\vcenter{\hbox{\input{diag/k34.tex}}} 
= 
\vcenter{\hbox{\input{diag/e34.tex}}} 
\] 
\end{lemma}
\begin{proof} 
 The proof is by induction on $(m,n)$ where $m$ (resp. $n$) is the
 number of red (resp. green) dots of the left hand side of the
 equation. Let $\prec$ be the lexicographical order (i.e. $(m,n)\prec
 (k,l)$ iff $m<n$ or $m=k \wedge m<l$).  Notice that if either $m =
 1$ or $n = 1$ then the resulting degree 1 vertices may simply be
 removed, by the spider theorem, hence the equation is trivially
 satisfied.  Moreover, if $m=n=2$ the equation is nothing but the
 bialgebra equation.  Let $(m,n)\succ (2,2)$. The following graphical
 proof is by induction, using the hypothesis of induction twice,
 first with $(m,n-1)$ and then with $(m,2)$.  
\[ 
\vcenter{\hbox{\input{diag/k34-1.tex}}}
= 
\vcenter{\hbox{\input{diag/k34-2.tex}}}
=
\vcenter{\hbox{\input{diag/k34-3.tex}}}
\] 
Notice in the first step we use the spider law to extract the
$K_{m,n-1}$ subgraph.
\end{proof}

\begin{lemma}\label{lem:Cn}For $n$, an even cycle of size $2n$, of alternating
  colours, can be rewritten into hexagons.  Graphically:
  \vspace{-0.6cm}
  \[ 
  \vcenter{\hbox{\input{diag/c8.tex}}} 
  = 
  \vcenter{\hbox{\input{diag/h4.tex}}} 
  \] 
  %
  
  %
\end{lemma}
\begin{proof}
 The proof is by induction, with one application of the bialgebra
 equation:
 \[
 \vcenter{\hbox{\input{diag/c8b.tex}}} =
 \vcenter{\hbox{\input{diag/h4-2b.tex}}} =
 \vcenter{\hbox{\input{diag/h4-1.tex}}}
 \]
 Note, as before, the use of the spider theorem in the first step.
\end{proof}

\section{Graph states}

In order to explore the power and the limits of the axioms we have
described, we now consider the example of \emph{graph states}.  Graph
states provide a good testing ground for our formalism because they are
relatively easy to describe, but have wide applications across quantum
information, for example they form a basis for universal quantum computation,
capture key properties of entanglement, are related to quantum  
error correction, establish links to graph theory and violate Bell
inequalities.

In this section we show how graph states may be defined in the graphical
language, and give a graphical proof of
the fix point property, a fundamental property of graph states.  The
next  section will expose a limitation of the theory, and we will see
that proving Van den Nest's theorem requires an additional axiom.

\begin{definition}
For a given simple undirected graph $G$, let $\ket G$ be the
corresponding graph state  
\[
\ket G = \left(
  \prod_{(u,v)\in E(G)} \CZ_{u,v} 
  \right)
  \left(
    \bigotimes_{u\in V(G)}\frac{\ket 0_u +\ket 1_u}{\sqrt 2}
  \right)
\]
where $V(G)$ (resp. $E(G)$) is the set of vertices (resp. edges) of
$G$. 
\end{definition}
\noindent Notice that for any $u,v,u',v'\in V(G)$, $\CZ_{u,v} = \CZ_{v,u}$ and
$\CZ_{u,v}\circ \CZ_{u',v'} = \CZ_{u',v'}\circ \CZ_{u,v}$, which make
the definition of $\ket G$ does not depends on the orientation or
order of the edges of $G$.

Since both the state $\ket +=\frac{\ket 0+\ket 1}{\sqrt 2}$ and the
unitary gate $\CZ$ can be depicted in the graphical calculus,
any graph state can be represented in the graphical language.   
For instance, the $3$-qubit graph state associated to the triangle is
represented as follows: \suckbb
\[ 
\ket {G_\textup{triangle}} 
~~=~~ \vcenter{\hbox{\input{diag/def-tri-1.tex}}} 
~~=~~  \vcenter{\hbox{\input{diag/def-tri-2.tex}}}
\] 

\vspace{0.3cm}

\noindent More generally, any graph $G$, $\ket G$ may be depicted by a
diagram 
composed of $|V(G)|$ green dots. Two green dots are connected with a
$H$ gate if and only if the corresponding vertices are connected in
the graph. Finally, one output wire is connected to every green dot.
Note that the qubits in this picture are the output wires rather than
the dots themselves; to act on a qubit with some operation we simply
connect the picture for that operation to the wire.

Having introduced the graphs states we are now in position to derive
one of their fundamental properties, namely the \emph{fixpoint
property}.

\begin{property}[Fixpoint]\label{lem:fixpoint} Given a graph $G$ and a vertex $u\in V(G)$, 
$$R_x(\pi)^{(u)}R_z(\pi)^{(N_G(u))} \ket G = \ket G$$
\end{property}

The fixpoint property can shown in the graphical calculus by the 
following example.  Consider a star-shaped graph shown 
below; the qubit $u$  is shown at the  top of the diagram, with its
neighbours below.  The fixpoint property simply asserts that the
depicted  equation holds.
\[ 
\vcenter{\hbox{\input{diag/Fixpoint-l.tex}}}  
= ~
\vcenter{\hbox{\input{diag/Fixpoint-r.tex}}} 
\] 

\begin{theorem}\label{thm:fixpoint}
The fixpoint property is provable in the graphical language.
\end{theorem}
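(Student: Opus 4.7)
The plan is to prove the fixpoint property by pushing the red $\pi$ at vertex $u$ through the green spider at $u$, then through each $H$ box on the incident edges, and finally absorbing the resulting green $\pi$'s into the green spiders at the neighbours. Recall that $\ket G$ is represented as one green spider per vertex of $G$ (with one output wire per vertex, and each edge of $G$ realised as a connection through an $H$ box between the two corresponding green spiders). The operator $R_x(\pi)^{(u)}R_z(\pi)^{(N_G(u))}$ places a red $\pi$ on the output wire of $u$ and a green $\pi$ on the output wire of each $v\in N_G(u)$.

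The first step is to apply the $\pi$-commutation rule, with colours reversed, to push the red $\pi$ on $u$'s output through the green spider at $u$. Iterating the basic two-input rule on a binary decomposition of the spider via the spider law yields the generalised form: a red $\pi$ on any one leg of a green spider equals a red $\pi$ on each of the remaining legs. After this step, the output wire of $u$ carries no phase, but there is a red $\pi$ on the $u$-side of every $H$ box incident to $u$.

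The second step uses the $H$-colour-change rule, which states $H\circ R_x(\pi)=R_z(\pi)\circ H$, to push each red $\pi$ through the $H$ on its edge, obtaining a green $\pi$ on the $v$-side of each such $H$. At every neighbour $v\in N_G(u)$ there are now two green $\pi$'s attached to the green spider at $v$: the one just produced, and the original $R_z(\pi)^{(v)}$ on $v$'s output wire. By the spider law, both merge into the spider at $v$, contributing a total phase of $\pi+\pi\equiv 0\pmod{2\pi}$, so the spider at $v$ is unchanged. The resulting diagram is then the original representation of $\ket G$.

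The only potential subtlety is the generalisation of $\pi$-commutation to green spiders of arbitrary arity, which reduces to the published two-input case by decomposing the spider into a binary tree of $2$-to-$1$ nodes via the spider law and propagating the $\pi$ along the tree. All rewrites are local to $u$ and its neighbourhood, so the rest of the graph structure plays no role in the argument; the star-shaped example in the excerpt already illustrates the full technique.
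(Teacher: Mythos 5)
Your proposal is correct and uses exactly the same ingredients as the paper's proof: copying the red $\pi$ through the green spider at $u$ (the $n$-ary form of $\pi$-commutation, obtained by spider-decomposition into binary nodes), converting it to a green $\pi$ via the $H$ colour-change rule on each incident edge, and cancelling against the $R_z(\pi)$ at each neighbour since $\pi+\pi\equiv 0$. The only difference is organisational: the paper first reduces to star graphs and runs an induction on the number of branches, whereas you isolate the generalised $\pi$-commutation as a lemma and perform all the rewrites on the general graph at once, which is a perfectly valid repackaging of the same argument.
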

\begin{proof}
 First, notice it is enough to consider star graphs. Indeed, for more
 complicated graphs, green rotations can always be pushed through the
 green dots, leading to the star case.
 
 Let $S_n$ be the star composed of $n$ vertices. Since the red $\pi$-rotation is a green comonoid homorphism, the fixpoint property is satisfied for $S_1$:
 \[ 
\vcenter{\hbox{\begin{tikzpicture}[xscale=0.50,yscale=0.50]
\useasboundingbox (-0.5,-0.5) rectangle (0.5,2.5);
\draw (0.00,0.00) -- (0.00,1.00) -- (0.00,2.00);
\filldraw[fill=red] (0.00,1.00) ellipse (0.50cm and 0.30cm);
\filldraw[fill=green] (0.00,0.00) ellipse (0.20cm and 0.20cm);
\draw (0.00,1.00) node{\tiny $\pi$};
\end{tikzpicture}
}} =
\vcenter{\hbox{\begin{tikzpicture}[xscale=0.50,yscale=0.50]
\useasboundingbox (-0.5,-0.5) rectangle (0.5,1.5);
\draw (0.00,0.00) -- (0.00,1.00);
\filldraw[fill=green] (0.00,0.00) ellipse (0.20cm and 0.20cm);
\end{tikzpicture}
}} 
\] 

By induction, for any $n>1$,

 \[ 
\vcenter{\hbox{\input{diag/FP-sn-1.tex}}} =~
\vcenter{\hbox{\input{diag/FP-sn-2.tex}}} =~
\vcenter{\hbox{\input{diag/FP-sn-3.tex}}} =~
\vcenter{\hbox{\input{diag/Fixpoint-r.tex}}} 
\] 
\end{proof}

%

\section{Local Complementation}

In this section, we present the Van den Nest theorem. According to
this theorem, if two graphs are locally equivalent (i.e. one graph can
be transformed into the other by means of local complementations) then
the corresponding quantum states are LC-equivalent, i.e. there
exists a local Clifford unitary\footnote{
One-qubit Clifford unitaries form a finite group generated by $\pi/2$
rotations around $X$ and $Z$ axis: $R_x(\pi/2), R_z(\pi/2)$. A
$n$-qubit local Clifford is the tensor product of $n$ one-qubit
Clifford unitaries.
} which transforms one state into the other. 
 We prove that the local complementation
property is true if and only if $H$ has an Euler decomposition into
$\pi/2$-green and red rotations. 
At the end of the section, we demonstrate that the $\pi/2$
decomposition does not hold in all models of the axioms, and hence
show that the axiom is truly necessary to prove Van den Nest's
Theorem.

\begin{definition}[Local Complementation]
  Given a graph $G$ containing some vertex $u$, we define the
  \emph{local complementation} of $u$ in $G$, written $G*u$ by the
  complementation of the neighbourhood of $u$, i.e.  $V(G*u) = V(G)$,
  $E(G*u):=E(G) \Delta (N_G(u)\times N_G(u))$, where $N_G(u)$ is the
  set of neighbours of $u$ in $G$ ($u$ is not $N_G(u)$) and $\Delta$ is
  the symmetric difference, i.e. $x\in A\Delta B$ iff $x\in A$ xor
  $x\in B$. 
\end{definition}

\begin{theorem}[Van den Nest]\label{vdn}
Given a graph $G$ and a vertex $u\in V(G)$, 
\[
R_x(-\pi/2)^{(u)}
R_z^{(N_G(u))}\ket G=\ket {G*u}\;.
\] 
\end{theorem}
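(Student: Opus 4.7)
The plan is to assume the Euler decomposition axiom $H = R_x(\pi/2) \circ R_z(\pi/2) \circ R_x(\pi/2)$ (announced in the paper as the additional axiom required) and to use it to rewrite each Hadamard sitting on an edge incident to $u$ in the diagram for $\ket G$. The chosen rotations $R_x(-\pi/2)^{(u)}$ and $R_z(\pi/2)^{(v_i)}$ are tailored to this decomposition, so that after substitution the spider law and $\pi$-commutation collapse all the book-keeping and the diagram for $\ket{G*u}$ emerges.

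First I would apply each $R_z(\pi/2)$ to its neighbour $v_i$: this green $\pi/2$ rotation fuses into the green vertex $v_i$ by the spider law, giving $v_i$ phase $\pi/2$. Then I expand every $u$-to-$v_i$ Hadamard by Euler into a red-green-red chain $r_i,g_i,r'_i$ of $\pi/2$-phase vertices, with $r_i$ between $u$ and the middle green $g_i$, and $r'_i$ between $g_i$ and $v_i$.

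Next I use the spider law to split $u$'s output leg from its other wires, and apply the generalised bialgebra of Lemma~\ref{lem:knmp2} to the star of red vertices $\{r_i\}$ around $u$ together with the middle greens $\{g_i\}$. This rewrite introduces direct red-green connections between every pair $v_i, v_j$ of neighbours of $u$; combined with the $r'_i$ already in place, each new connection is a red-green-red chain which, read backwards through the Euler axiom, is exactly a Hadamard edge between $v_i$ and $v_j$. Simultaneously the $u$-to-$v_i$ Hadamards of $\ket G$ are recovered on the other side of the bialgebra rewrite, and the $R_x(-\pi/2)$ on $u$'s output precisely cancels the leftover red $\pi/2$ phase produced by the contraction.

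For pairs $v_i, v_j$ already connected in $G$ by a Hadamard, the original edge and the newly created one form a pair of parallel Hadamard edges between two green dots; the identity $H \circ H = \id{}$ together with the spider law collapse this pair to a trivial connection which simply disappears, implementing the symmetric difference in the definition of $E(G*u)$. The main obstacle is keeping the $\pi/2$-phase book-keeping correct through the combined bialgebra/Euler step, so that $u$ and every $v_i$ end at phase $0$ with no residual rotations on their output wires. Following the pattern of Theorem~\ref{thm:fixpoint}, it should be possible first to verify the case $G = S_n$ and then to argue that the additional edges among $N_G(u)$ commute harmlessly through the rewrite and combine cleanly via the $H \circ H = \id{}$ cancellation, producing precisely the edge set of $G*u$.
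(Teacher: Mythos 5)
Your overall strategy is the right one, and it matches the paper's. Note first that the paper states Theorem~\ref{vdn} as Van den Nest's known result from the literature; what is actually proved is the \emph{conditional} graphical derivation from the Euler decomposition axiom, which is one direction of Theorem~\ref{thm:LC} and occupies Section~5.1. Like the paper, you assume that axiom, reduce the general graph to the star case by pushing the local rotations through the green dots, generate the complete graph on $N_G(u)$, and cancel doubled edges at the end to realise the symmetric difference. Two small inaccuracies in the framing: the doubled edges between two green dots are \emph{parallel} $H$-edges, so $H\circ H = \id{}$ does not apply directly--the paper uses the Hopf law for this cancellation (cf.\ Lemma~\ref{lem:no-plll-edges} and the ``General case'' paragraph); and your chosen form $H=R_x(\pi/2)R_z(\pi/2)R_x(\pi/2)$ is not the paper's axiom, though Lemma~\ref{lem:H-euler-non-unique} and the subsequent remark do make a red--green--red variant available, so this is repairable provided the signs are tracked.

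The genuine gap is in the star case, which is where all the work lies. After you Euler-expand each $H$ on a $u$--$v_i$ edge into a chain $r_i$--$g_i$--$r_i'$ of $\pi/2$-phased dots, the bipartite graph on $\{r_i\}\cup\{g_i\}$ is a perfect matching whose vertices all carry nonzero phases. Lemma~\ref{lem:knmp2} is stated and proved only for \emph{phaseless} complete bipartite configurations, so it cannot be invoked on this picture in either direction; before any bialgebra step can fire, every $\pi/2$ phase must be moved out of the way, and the only rule that exchanges a red $\pi/2$ for a green one is Lemma~\ref{lem:pi2-colour-change}, which your argument never uses. This phase book-keeping is not a detail that ``collapses'': writing the target $K_{n-1}$ in Euler form deposits a green $\pi/2$ on each neighbour for \emph{each} of its $n-2$ incident new edges, whereas the applied $R_z^{(N_G(u))}$ supplies only one unit per neighbour, so the claim that $u$ and every $v_i$ end at phase $0$ is precisely the statement to be proved and is nowhere derived. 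The paper resolves exactly this difficulty by an induction on $n$ in Lemma~\ref{LCStar}, peeling off one neighbour at a time and interleaving the colour-change lemma with the bialgebra rewrites, with the triangle case worked out explicitly in Lemma~\ref{lem:triangle}. As written, your central step asserts the desired outcome rather than deriving it.
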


\noindent We illustrate the theorem in the case of a star graph:
\[ 
\vcenter{\hbox{\input{diag/kn-l.tex}}} 
= ~
\vcenter{\hbox{\input{diag/kn-r.tex}}} 
\]
where $K_{n-1}$ denotes the totally connected graph.

\begin{theorem}\label{thm:LC}
Van den Nest's theorem holds if and only if $H$ can be
decomposed into $\pi/2$ rotations as follows: 
\vspace{-0.7cm}
\[  
\vcenter{\hbox{}} 
= ~
\vcenter{\hbox{\input{diag/axiomEuler-r.tex}}} 
\] 
\end{theorem}

\noindent Notice that this equation is nothing but the  Euler decomposition of
$H$: $$H  =R_Z(-\pi/2)\circ R_X(-\pi/2)\circ R_Z(-\pi/2)$$ 
Several interesting consequences follow from the decomposition.  We
note two:

%


\begin{lemma}\label{lem:H-euler-non-unique} The $H$-decomposition into
  $\pi/2$ rotations is not unique: \vspace{-0.5cm}
\[ 
\vcenter{\hbox{}} 
= ~
\vcenter{\hbox{\input{diag/axiomEuler-r.tex}}} 
~~\implies~~
\vcenter{\hbox{}} 
= ~
\vcenter{\hbox{\input{diag/axiomEuler-rb.tex}}} 
\] 
\end{lemma}
\begin{proof} 
 \[
 \vcenter{\hbox{}} ~=~
 \vcenter{\hbox{\input{diag/axiomEuler-1.tex}}} ~=~
 \vcenter{\hbox{\input{diag/axiomEuler-2.tex}}} ~=~
 \vcenter{\hbox{\input{diag/axiomEuler-3.tex}}} ~=~
 \vcenter{\hbox{\input{diag/axiomEuler-rb.tex}}}
 \]
\end{proof}

\begin{lemma}\label{lem:pi2-colour-change}  Each colour of $\pi/2$
  rotation may be expressed in terms of the other colour.\vspace{-0.5cm}
\[ 
\vcenter{\hbox{}} 
= ~
\vcenter{\hbox{\input{diag/axiomEuler-r.tex}}} 
~~\implies~~
\vcenter{\hbox{\begin{tikzpicture}[xscale=0.50,yscale=0.50]
\useasboundingbox (-0.5,-0.5) rectangle (0.5,2.5);
\draw (0.00,2.00) -- (0.00,1.00) -- (0.00,0.00);
\filldraw[fill=red] (0.00,1.00) ellipse (0.50cm and 0.30cm);
\draw (0.00,1.00) node{\tiny $\nicefrac{\pi}{2}$};
\end{tikzpicture}
}} 
= 
\vcenter{\hbox{\input{diag/hpi_6.tex}}} 
\] 
\end{lemma}
\begin{proof}
 \[
 \vcenter{\hbox{\input{diag/hpi_6.tex}}} =
 \vcenter{\hbox{\input{diag/hpi-1.tex}}} =
 \vcenter{\hbox{\input{diag/hpi-2.tex}}} =
 \vcenter{\hbox{\input{diag/hpi-3.tex}}} =
 \vcenter{\hbox{\input{diag/hpi-4.tex}}} =
 \vcenter{\hbox{\input{diag/hpi-5.tex}}}
 \]
\end{proof}

\noindent \emph{Remark:} The preceding lemmas depend only on the
\emph{existence} of a decomposition of the form $H = R_z(\alpha) \;R_x
(\beta )\; R_z(\gamma )$.   It is straight forward to generalise these
result based on an arbitrary sequence of rotations, although in the
rest of this paper we stick to the concrete case of $\pi/2$.

Most of the rest of the paper is devoted to proving Theorem \ref{thm:LC}: the
equivalence of Van den Nest's theorem and the Euler form of $H$.  We
begin by proving the easier direction: that the Euler decomposition
implies the local complementation property.

\subsection{Euler Decomposition Implies Local Complementation}
\label{sec:eulur-decomp-impl}

\subsubsection{Triangles}

We begin with the simplest non trivial
examples of local complementation, namely triangles. A local
complementation on one vertex of the triangle removes the opposite
edge.  

\begin{lemma}\label{lem:triangle} ~
\vspace{-0.8cm}
\[ 
\vcenter{\hbox{}} 
= ~
\vcenter{\hbox{\input{diag/axiomEuler-r.tex}}} 
~~\implies~~
\vcenter{\hbox{\input{diag/tri-l.tex}}} 
= ~
\vcenter{\hbox{\input{diag/tri-r.tex}}} 
\] 
\end{lemma}
\begin{proof}
 \[
 \vcenter{\hbox{\input{diag/tri-1.tex}}} =
 \vcenter{\hbox{\input{diag/tri-2.tex}}} =
 \vcenter{\hbox{\input{diag/tri-3.tex}}} =
 \vcenter{\hbox{\input{diag/tri-4.tex}}} =
 \vcenter{\hbox{\input{diag/tri-r.tex}}}
 \]
 Note the use of Lemma \ref{lem:pi2-colour-change} in the last
 equation.
\end{proof}

\subsubsection{Complete Graphs and Stars}

More generally, $S_n$ (a star composed of $n$ vertices) and
$K_n$ (a complete graph on $n$ vertices) are locally equivalent for all
$n$. 

\begin{lemma}\label{LCStar} ~\vspace{-0.7cm}
\[ 
\vcenter{\hbox{}} 
= ~
\vcenter{\hbox{\input{diag/axiomEuler-r.tex}}} 
~~\implies~~
\vcenter{\hbox{\input{diag/kn-l.tex}}} 
= ~
\vcenter{\hbox{\input{diag/kn-r.tex}}} 
\] 
\end{lemma}
\begin{proof} 
 \[
 \vcenter{\hbox{\input{diag/kn-1.tex}}} =
 \vcenter{\hbox{\input{diag/kn-2.tex}}} =
 \vcenter{\hbox{\input{diag/kn-3.tex}}} =
 \vcenter{\hbox{\input{diag/kn-4.tex}}} =
 \vcenter{\hbox{\input{diag/kn-5.tex}}}
 \]
 \[
 = \vcenter{\hbox{\input{diag/kn-6.tex}}} =
 \vcenter{\hbox{\input{diag/kn-7.tex}}} =
 \vcenter{\hbox{\input{diag/kn-r.tex}}}
 \]
\end{proof}

\subsubsection{General case}

The general case can be reduced to the previous case: first green
rotations can always be pushed through green dots for obtaining the
lhs of equation in Lemma \ref{LCStar}. After the application of the
lemma, one may have pairs of vertices having two edges (one coming
from the original graph, and the other from the complete graph). The
Hopf law is then used for removing these two edges.

\subsection{ Local Complementation Implies Euler Decomposition}
\label{sec:localcomp-impl-eulur}

\begin{lemma}\label{lem:decomp}
Local complementation implies the $H$-decomposition: ~Ê\vspace{-0.5cm}
\[ 
\vcenter{\hbox{\input{diag/tri-l.tex}}} 
= ~
\vcenter{\hbox{\input{diag/tri-r.tex}}}
~~\implies~~
\vcenter{\hbox{}} 
= ~
\vcenter{\hbox{\input{diag/axiomEuler-r.tex}}} 
\] 
\end{lemma}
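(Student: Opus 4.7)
The plan is to specialise the LC hypothesis to the simplest nontrivial instance—the triangle/star equivalence on three vertices—and then cap off the leaf outputs to reduce the resulting three-wire identity to a one-wire identity between $H$ and its Euler decomposition.

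First, I would instantiate the hypothesis at the three-vertex path $v\text{--}u\text{--}w$ with local complementation at the centre $u$. By Van den Nest this gives the diagrammatic equation
\[
R_{x}(-\pi/2)^{(u)}\,R_{z}(-\pi/2)^{(v)}\,R_{z}(-\pi/2)^{(w)}\,|P_{3}\rangle \;=\; |K_{3}\rangle.
\]
The LHS is the star graph state with the stated rotations on its three output wires, the RHS is the triangle graph state.

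Second, I would absorb each leaf $R_{z}(-\pi/2)$ into its green leaf vertex by the spider law, and then close each leaf output with a green co-unit cap $\epsilon_{Z}$. On the LHS, the caps fuse into the leaf $-\pi/2$ green spiders and the resulting $H$-rotation subdiagrams simplify, via the $H$-equation $H\circ p_{Z}(\alpha) = p_{X}(\alpha)\circ H$, to red $-\pi/2$ states feeding into the central $u$ green dot. On the RHS, the caps fuse into the leaf green spiders, leaving a closed bipartite subdiagram of green dots connected by $H$-edges with only the $u$ output free; converting $H$-edges via the $H$-equations and applying the generalised bialgebra law of Lemma \ref{lem:knmp2} together with the Hopf law collapses this residual structure to a single $H$ gate on the $u$ wire.

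Third, I tidy up and compare. The simplified LHS reduces—via the copying axiom and the spider law for the two red $-\pi/2$ states entering the green multiplication at $u$—to the composite $R_{z}(-\pi/2)\,R_{x}(-\pi/2)\,R_{z}(-\pi/2)$ on the $u$-wire, and the simplified RHS is a bare $H$ gate on the same wire. Their equality is exactly the Euler decomposition.

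The main obstacle is selecting the leaf cap so that both sides actually reduce to non-trivial, distinct one-wire diagrams; a $Z$-eigenstate cap like $|0\rangle$ would trivialise the leaf $R_{z}$ rotations (since $|0\rangle$ is fixed by $R_{z}$ up to phase) and give only a tautology, so the green co-unit $\epsilon_{Z}$ is the natural choice: it forces the leaf rotations to participate nontrivially while permitting the triangle's extra $vw$ edge to be absorbed by the bialgebra axioms. The remaining diagrammatic book-keeping—pushing $H$-gates past rotations and reducing the small bipartite pattern left on the triangle side—is straightforward but somewhat tedious, using only axioms already available in the calculus.
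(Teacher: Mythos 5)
Your choice of instance (the path $v\,\text{--}\,u\,\text{--}\,w$ versus the triangle) matches the paper's, but the capping step breaks the argument. By closing \emph{both} leaf outputs with $\epsilon_Z$ you leave only the single wire at $u$ free, so each side of your equation is a one-qubit \emph{state}, and the best you can extract is an identity of the form $R_z(-\pi/2)R_x(-\pi/2)R_z(-\pi/2)\ket\phi = H\ket\phi$ for one particular $\ket\phi$. Two single-qubit unitaries agreeing on a single state are not equal, so this cannot yield the operator equation that the lemma requires. The paper avoids exactly this trap by capping the \emph{centre} vertex $u$ with the red point $\epsilon_X^\dag$ (the state $\ket 0$), leaving the two leaf wires open: the red point copies through $u$'s green node and colour-changes through the $H$-edges, so the triangle side collapses to the two-output state $\ket{K_2}$, which via the compact structure is the \emph{name} of $H$, while the star side collapses to the name of a composite of rotations; bending one wire then turns the state equation into the desired map equation. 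With only one output left free there is no wire to bend, and the information needed to recover $H$ as a map is simply gone.

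There are also problems with the intermediate reductions as you describe them. After capping, each leaf of the star contributes a red $-\pi/2$ point entering the green node at $u$; the copying axiom applies only to the \emph{phaseless} red point, and the spider law never fuses red into green, so neither axiom combines these two phased points into $R_z(-\pi/2)R_x(-\pi/2)R_z(-\pi/2)$ as you claim (indeed, evaluating the green product of two red $-\pi/2$ points is itself a nontrivial identity not among the listed axioms). Symmetrically, on the triangle side the capped leaves become degree-two phaseless green dots, i.e.\ plain wires, so what remains is a self-loop at $u$ passing through three $H$ boxes in series --- a closed scalar-like decoration on the green node, not ``a single $H$ gate on the $u$ wire''; there is no input wire for such a gate to act on. Both defects are symptoms of the same structural issue: the vertex you must leave alone is $u$, and the vertex you must cap is $u$'s output in the \emph{other} role --- i.e.\ plug a state into $u$ and keep the neighbours' wires as the two ends of the map you are trying to identify.
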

\begin{proof}
The local complementation property can be rewritten as follows:
 \[
 \vcenter{\hbox{\input{diag/Euler1.tex}}} = ~
 \vcenter{\hbox{\input{diag/Euler2.tex}}}
 \]
 then
 \[
 \vcenter{\hbox{\input{diag/Euler3.tex}}} = ~
 \vcenter{\hbox{\input{diag/Euler4.tex}}}
 \]
 Since,
   \[
 \vcenter{\hbox{\input{diag/Euler3.tex}}} = ~
 \vcenter{\hbox{\input{diag/Euler5.tex}}} =~
 \vcenter{\hbox{\input{diag/Euler6.tex}}} =~
 \vcenter{\hbox{}}
 \]
 And
     \[
 \vcenter{\hbox{\input{diag/Euler4.tex}}} =~
   \vcenter{\hbox{\input{diag/Euler7.tex}}} =~
 \vcenter{\hbox{\input{diag/Euler8.tex}}} =~
 \vcenter{\hbox{\input{diag/tri_7.tex}}}
 \]
 So 
 \[
 \vcenter{\hbox{}} =
 \vcenter{\hbox{\input{diag/tri_7.tex}}}
 \]
 Complementing the above equation with $H$ on both sides, we obtain:
 \[
 \vcenter{\hbox{}} =
 \vcenter{\hbox{\input{diag/tri_8.tex}}}
 \]
 Finally,
 \[
 \vcenter{\hbox{}} =
 \vcenter{\hbox{\input{diag/tri_7.tex}}} =
 \vcenter{\hbox{\input{diag/tri_9.tex}}} =
 \vcenter{\hbox{\input{diag/tri_10.tex}}} =
 \vcenter{\hbox{\input{diag/tri_11.tex}}} =
 \vcenter{\hbox{\input{diag/tri_12.tex}}}
 \]
 which is the desired decomposition.
\end{proof}

\noindent This completes the proof of Theorem~\ref{thm:LC}.  Note that
we have shown the equivalence of two equations, both of which were
expressible in the graphical language.  What remains to be established
is that these properties---and here we focus on the decomposition of
$H$---are not derivable from the axioms already in the  system. To do
so we define a new interpretation functor.

Let $n \in \mathbb{N}$ and define $\denote{\cdot}_n$ exactly as
$\denote{\cdot}$ with the following change:
\[
\denote{p_X(\alpha)}_n = \denote{p_X(n\alpha)}
\qquad\qquad
\denote{p_Z(\alpha)}_n = \denote{p_Z(n\alpha)}
\]
Note that $\denote{\cdot} = \denote{\cdot}_1$.  Indeed, for all 
$n$, this functor preserves all the axioms introduced in 
Section~\ref{sec:graphical-formalism}, so its image is indeed a valid
model of the  theory.  However we have the following inequality
\[
  \denote{H}_n \neq \denote{p_Z(-\pi/2)}_n\circ 
                     \denote{p_X(-\pi/2)}_n\circ 
                     \denote{p_Z(-\pi/2)}_n
\]
for example, in $n=2$, hence the Euler decomposition is not derivable
from the axioms of the theory.

\section{Conclusions}
\label{sec:conclusions}

We studied graph states in an abstract axiomatic setting and saw that
we could prove Van den Nest's theorem if we added an additional
axiom to the theory.  Moreover, we proved that the
$\pi/2$-decomposition of $H$ is exactly the extra power which is
required to prove the theorem, since we prove that the
Van den Nest theorem is true if and only if $H$ has a $\pi/2$
decomposition.  It is worth noting that the system without $H$ is
already universal in the sense every unitary map is expressible, via
an Euler decomposition.  The original system this contained two
representations of $H$ which
could not be proved equal; it's striking that removing this ugly
wart on the theory turns out to necessary to prove a non-trivial
theorem.   In closing we note that this seemingly abstract high-level
result was discovered by studying rather concrete 
problems of measurement-based quantum computation.

\bibliography{CIE}

\end{document}